\newtheorem{theorem}{\normalfont\scshape Theorem}
\newtheorem{definition}{\normalfont\scshape Definition}
\newtheorem{lemma}{\normalfont\scshape Lemma}
\newtheorem{proposition}{\normalfont\scshape Proposition}
\renewenvironment{proof}{{\normalfont\scshape Proof}}
\begin{document}

\begin{center}
{\Large On estimation in varying coefficient models for sparse and irregularly sampled functional data}
\end{center}

\vspace{.2cm}
\begin{center}

{\large Behdad\ Mostafaiy}\\

Department\ of\ Statistics \\

University\ of\ Mohaghegh\ Ardabili \\

Ardabil,\ Iran \\

behdad.mostafaiy@gmail.com

\end{center}

\begin{abstract}
In this paper, we study a smoothness regularization method for a varying coefficient model based on sparse and irregularly sampled functional data which is contaminated with some measurement errors. We estimate the one-dimensional covariance and cross-covariance functions of the underlying stochastic processes based on a reproducing kernel Hilbert space approach. We then obtain least squares estimates of the coefficient functions. Simulation studies demonstrate that the proposed method has good performance. We illustrate our method by an
analysis of longitudinal primary biliary liver cirrhosis data.
\end{abstract}
\noindent
\textbf{Keywords:} Functional data analysis, regularization, reproducing kernel Hilbert space, sparsity, varying coefficient model.

\section{Introduction}

Varying coefficient models are introduced by \cite{HT1993}. They are an extension of classical linear regression models where the coefficients are smooth functions. They are used for modeling the dynamic impacts of the underlying covariates on the response. Varying coefficient models have been extensively studied in the literature. Various types of varying coefficient models have been studied and developed for longitudinal data, time series, high dimensional data and functional data. See, for example, \cite{HRWY1998}, \cite{KT1999}, \cite{WC2000}, \cite{CRW2001}, \cite{HWZ2004}, \cite{RS2005}, \cite{SM2010}, \cite{ZLK2012}, \cite{V2014}, \cite{SYZ2014}, \cite{KP2015} and \cite{LM2016} among others.

In this paper, we consider the following multiple varying coefficient model
\begin{equation}\label{vary}
Y(t) = \beta_0 (t) + \sum \limits_{p=1}^{d_1}\beta_p (t)X_p(t)+ \sum \limits_{q=1}^{d_2}\alpha_q(t)Z_q + \eta(t),\qquad t\in \mathcal{T}
\end{equation}
where $Y(t)$ is the response process, $X_1(t),\dots ,X_{d_1}(t)$ are the predictor processes, $Z_1,\dots ,Z_{d_2}$ are time-independent predictors, $\eta(t)$ is a noise process with zero mean and independent of the predictors, and $\beta_0 (t)$, $\beta_1(t),\dots ,\beta_{d_1}(t)$ and $\alpha_1(t),\dots ,\alpha_{d_2}(t)$ are smoothed parameter functions. It is assumed that $Y(t)$ and $X_1(t),\dots , X_{d_1}(t)$ are square integrable and $Z_1,\dots ,Z_{d_2}$ have finite second moments.

The aim of this article is estimating the parameter functions in the situation that the observations are sparse and irregular longitudinal data and combined with some measurement errors. Following \cite{YMW2005}, we model this situation as follows. Let $U_{ij}$ and $V_{ij}$ denote the observations of the random functions $X_i$ and $Y_i$ respectively at the random times $T_{ij}$, contaminated with measurement errors $\varepsilon _{pij}$ and $\epsilon_{i\,j}$ respectively, which are assumed to be independent and identically distributed with means zero and variances $\sigma^2_{_{X_p}}$ and $\sigma^2_{_{Y}}$ respectively, and independent of the random functions. We represent the observed data as
\begin{equation}\label{UV}
\begin{aligned}
U_{pij}&=&X_{pi}(T_{ij})+\varepsilon _{pij}, \qquad j=1,\,\dots,\, M_i; \qquad i=1,\,\dots,\,n\,,\\
V_{ij}&=&Y_i(T_{ij})+\epsilon _{ij},\qquad \quad j=1,\,\dots,\, M_i; \qquad i=1,\,\dots,\,n\,.
\end{aligned}
\end{equation}
Here $M_i$ is a nonnegative integer-valued random variable that denotes the sampling frequency for $i$th trajectory.

For sparse noisy functional data, \cite{SM2010} studied model \eqref{vary} with one functional predictor. They obtained a representation for the coefficient function based on one-dimensional covariance and cross-covariance functions of the predictor and response processes. They used local linear smoother method for their estimation procedures. \cite{SN2011} extended the approach of \cite{SM2010} to multiple predictors including both functional and non-functional predictors. \cite{MRH2016} considered one functional predictor. They proposed a reproducing kernel Hilbert space approach to estimate the coefficient function.

By taking expectation from the both sides of \eqref{vary}, we have
\begin{equation}\label{beta0}
\beta_0 (t) = \mu_Y(t) - \sum \limits_{p=1}^{d_1}\beta_p (t)\mu_{X_p}(t) - \sum \limits_{q=1}^{d_2}\alpha_q(t)\mu_{Z_q},\qquad t\in \mathcal{T},
\end{equation}
where $\mu_Y(t)=E[Y(t)]$, $\mu_{X_p}(t)=E[X_p(t)]$, $p=1,\dots ,d_1$ and $\mu_{Z_q}=E[Z_q]$, $q=1,\dots ,d_2$. Substituting equation \eqref{beta0} in \eqref{vary} yields
\begin{equation}\label{vary2}
Y(t)- \mu_Y(t) = \sum \limits_{p=1}^{d_1}\beta_p (t)(X_p(t)-\mu_{X_p}(t))+ \sum \limits_{q=1}^{d_2}\alpha_q(t)(Z_q-\mu_{Z_q}) + \eta(t),\qquad t\in \mathcal{T}
\end{equation}

By multiplying both sides of \eqref{vary2} by $X_p(t)$, $p=1,\dots ,k_1$ and $Z_q$, $q=1,\dots ,k_2$, and then taking expectations and writing the results in matrix form, we get
\begin{equation}\label{parameters}
[\beta_1(t),\dots ,\beta_{k_1}(t),\alpha_1(t),\dots ,\alpha_{k_2}(t)]^\prime =\bm{ \Gamma } _t^{-1}\bm{\gamma}_t,
\end{equation}
where
\begin{equation*}
\bm{ \Gamma } _t =
\begin{bmatrix}
C_{X_1X_1}(t) & \dots & C_{X_1X_{k_1}}(t) & C_{X_1Z_1}(t) & \dots & C_{X_1Z_{k_2}}(t) \\
\vdots & \ddots & \vdots & \vdots & \ddots & \vdots
\\
C_{X_{k_1}X_1}(t) & \dots & C_{X_{k_1}X_{k_1}}(t) & C_{X_{k_1}Z_1}(t) & \dots & C_{X_{k_1}Z_{k_2}}(t) \\
C_{Z_1X_1}(t) & \dots & C_{Z_1X_{k_1}}(t) & C_{Z_1Z_1} & \dots & C_{Z_1Z_{k_2}} \\
\vdots & \ddots & \vdots & \vdots & \ddots & \vdots
\\
C_{Z_{k_2}X_1}(t) & \dots & C_{Z_{k_2}X_{k_1}}(t) & C_{Z_{k_2}Z_1} & \dots & C_{Z_{k_2}Z_{k_2}} \\
\end{bmatrix},
\end{equation*}
and
\begin{equation*}
\bm{\gamma}_t =
\begin{bmatrix}
C_{YX_1}(t) & \dots & C_{YX_{k_1}}(t) & C_{YZ_1}(t) & \dots & C_{YZ_{k_2}}(t)
\end{bmatrix} ^ \prime .
\end{equation*}
Here $C_{X_{p_1}X_{p_2}}(t)=\operatorname{cov}(X_{p_1}(t),X_{p_2}(t))$, $C_{X_{p}Z_{q}}(t)=C_{Z_{q}X_{p}}(t)=\operatorname{cov}(X_p(t),Z_q)$, $C_{Z_{q_1}Z_{q_2}}=\operatorname{cov}(Z_{q_1},Z_{q_2})$,
$C_{YX_p}(t)=\operatorname{cov}(Y(t),X_p(t))$ and $C_{YZ_q}(t)=\operatorname{cov}(Y(t),Z_q)$. Based on the representation \eqref{parameters}, we introduce an estimate of the parameter functions. To do this, we estimate every elements of $\bm{\Gamma}_t$ and $\bm{\gamma}_t$. The scalar parameters of $\bm{\Gamma}_t$ can be easily estimated. To estimate the parameter functions of $\bm{\Gamma}_t$ and $\bm{\gamma}_t$, we use a reproducing kernel Hilbert space (RKHS) framework. By assuming the sample paths of $X_p$s, $p=1,\dots ,k_1$, and $Y$ to be smooth such that they belong to some RKHSs, we show that the one-dimensional covariance and cross-covariance functions come from some RKHSs. Based on these results, we introduce some smoothness regularization methods to estimate these parameter functions. By simulation, we investigate the merits of the proposed method especially by comparing it to some other existing methods.

The paper is organized as follows. In Section \ref{section2}, we review some basic properties of RKHS. In Section \ref{section3}, we utilize a regularization method to estimate the one-dimensional covariance and cross-covariance functions and then provide estimates of the coefficient functions. Simulation studies in two cases (one predictor and multiple predictors) are provided in Section \ref{section4}. In Section \ref{section5}, we apply the method to longitudinal primary biliary liver cirrhosis data.


\section{Reproducing kernel Hilbert spaces}\label{section2}
The theory of RKHS plays a pivotal role in this paper. In this section, we present some fundamental concepts and basic facts of RKHS. The readers are referred to \cite{A1950}, \cite{BTA2004} and \cite{HE2015} for more details.
\begin{definition}\label{D1}
A symmetric, real-valued bivariate function $K$ on $\mathcal{T}\times \mathcal{T}$ is nonnegative definite, denoted by $K \geq 0$, provided that
\begin{equation*}
\sum \limits_{i=1}^N \sum \limits_{j=1}^N \alpha_i \alpha_j K(t_i,t_j) \geq 0,
\end{equation*}
for all $N \in \mathbb{N}$, $\alpha _1, \ldots , \alpha _N \in \mathbb{R}$, and $t_1, \ldots ,t_N \in \mathcal{T}$. In other words, $K \geq 0$ provided that for every $N \in \mathbb{N}$ and distinct points, $\{t_1,\ldots ,t_N \} \subseteq \mathcal{T}$, the matrix $\mathbf{K}:=[K(t_i,t_j)]$ be a nonnegative definite matrix, that is $\mathbf{K} \geq 0$.
\end{definition}
\begin{lemma}\label{L1}
Let $\mathcal{H}$ is a Hilbert space with inner product $\langle \cdot , \cdot \rangle _{\mathcal{H}}$ and $\phi : \mathcal{T}\longrightarrow \mathcal{H}$ is a function on $\mathcal{T}$. Then the function $K(s,t):=\langle \phi (s),\phi (t) \rangle _{\mathcal{H}}$ on $\mathcal{T} \times \mathcal{T}$ is nonnegative definite.
\end{lemma}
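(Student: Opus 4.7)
The plan is to verify directly the two ingredients of Definition \ref{D1}: that $K$ is symmetric and real-valued, and that the associated quadratic form is nonnegative. Both fall out from bilinearity and symmetry of the inner product on $\mathcal{H}$, so the proof should be very short.

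First I would note the symmetry and real-valuedness: since $\mathcal{H}$ is a real Hilbert space, $K(s,t)=\langle \phi(s),\phi(t)\rangle_{\mathcal{H}}=\langle \phi(t),\phi(s)\rangle_{\mathcal{H}}=K(t,s)$, and each inner product is a real number, so $K$ is a symmetric, real-valued bivariate function on $\mathcal{T}\times \mathcal{T}$.

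Next I would fix arbitrary $N\in\mathbb{N}$, scalars $\alpha_1,\dots,\alpha_N\in\mathbb{R}$, and points $t_1,\dots,t_N\in\mathcal{T}$, and collapse the double sum via bilinearity. Setting $v:=\sum_{i=1}^N \alpha_i\,\phi(t_i)\in\mathcal{H}$, I would write
\begin{equation*}
\sum_{i=1}^N\sum_{j=1}^N \alpha_i\alpha_j\,K(t_i,t_j)=\sum_{i=1}^N\sum_{j=1}^N \langle \alpha_i\phi(t_i),\alpha_j\phi(t_j)\rangle_{\mathcal{H}}=\Big\langle \sum_{i=1}^N \alpha_i\phi(t_i),\,\sum_{j=1}^N \alpha_j\phi(t_j)\Big\rangle_{\mathcal{H}}=\|v\|_{\mathcal{H}}^2\geq 0,
\end{equation*}
which is exactly the condition in Definition \ref{D1}.

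There is no real obstacle — the whole content is that a Gram-type matrix of inner products is positive semidefinite. The only minor points to be cautious about are ensuring the inner product is treated as bilinear (which is fine in the real setting used throughout the paper) and making clear that this works for every $N$ and every choice of $\alpha_i,t_i$, so the definition of nonnegative definiteness is satisfied in full generality.
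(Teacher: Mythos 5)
Your proof is correct and complete: the Gram-matrix identity $\sum_{i,j}\alpha_i\alpha_j K(t_i,t_j)=\|\sum_i\alpha_i\phi(t_i)\|_{\mathcal{H}}^2\geq 0$, together with the symmetry of the real inner product, is exactly the standard argument for this fact, which the paper states without proof as a known result. Nothing is missing.
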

\begin{definition}\label{D2}
For a Hilbert space $\mathcal{H}$ with inner product $\langle \cdot ,\cdot \rangle _{\mathcal{H}}$, a bivariate function $K(s,t)$ for $s,t \in \mathcal{T}$ is called a reproducing kernel of $\mathcal{H}$ if the following are satisfied:
\begin{itemize}
\item[(i)] For every $t \in \mathcal{T}$, $K(\cdot,t) \in \mathcal{H}$.
\item[(ii)] For every $t \in \mathcal{T}$ and every $f \in \mathcal{H}$, 
\begin{equation}\label{RP}
f(t)=\langle f , K(\cdot , t) \rangle _{\mathcal{H}}.
\end{equation}
\end{itemize}
\end{definition}
Relation \eqref{RP} is called the reproducing property of $K$.
\begin{definition}\label{D3}
A Hilbert space $\mathcal{H}$ of functions on $\mathcal{T}$ is called an RKHS if there exist a reproducing kernel $K$ of $\mathcal{H}$.
\end{definition}
From now on, we denote a reproducing kernel Hilbert space $\mathcal{H}$ with the reproducing kernel $K$ by $\mathcal{H}(K)$ and the corresponding inner product and norm by $\langle \cdot , \cdot \rangle _{\mathcal{H}(K)}$ and $\| \cdot \|_{\mathcal{H}(K)}$, respectively.

By using properties (i) and (ii) in Definition \ref{D2}, for any $N,N^{\prime} \in \mathbb{N}$, $\alpha _1, \ldots , \alpha _N,\alpha _1^{\prime}, \ldots , \alpha _{N^{\prime}}^{\prime} \in \mathbb{R}$ and $t_1, \ldots ,t_N,t_1^{\prime}, \ldots ,t_{N^{\prime}}^{\prime} \in \mathcal{T}$, we have
\begin{equation}\label{NormSum}
\langle \sum\limits_{i=1}^N \alpha_i K(\cdot,t_i),\sum\limits_{j=1}^{N^{\prime }} \alpha_j^{\prime} K(\cdot,{t^{\prime }}_j) \rangle _{\mathcal{H}(K)} =
\sum\limits_{i=1}^N \sum\limits_{j=1}^{N^{\prime }} \alpha_i \alpha_j^{\prime} K(t_i,{t^{\prime }}_j)
\end{equation}
The following proposition states the uniqueness of reproducing kernel $K$ and RKHS $\mathcal{H}(K)$.
\begin{proposition}\label{P1}
If $K$ is a reproducing kernel of $\mathcal{H}(K)$ then $K$ is nonnegative definite and unique. Conversely, if $K$ is a nonnegative definite bivariate function on $\mathcal{T} \times \mathcal{T}$, there exists a uniquely determined Hilbert space $\mathcal{H}(K)$ of functions on $\mathcal{T}$, admitting the reproducing kernel $K$.
\end{proposition}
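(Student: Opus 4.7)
My plan is to split the proposition into four pieces: (a) any reproducing kernel is nonnegative definite, (b) it is unique once $\mathcal{H}$ is fixed, (c) existence of an RKHS associated to a given nonnegative definite $K$, and (d) its uniqueness. I would handle (a) and (b) quickly, then concentrate the work on the converse direction, which is the substantive part (the Moore--Aronszajn construction).

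For (a) I would observe that, by property (i) in Definition \ref{D2}, the map $\phi(t):=K(\cdot,t)$ takes values in $\mathcal{H}(K)$, and by the reproducing property (ii) together with symmetry of the inner product, $\langle \phi(s),\phi(t)\rangle_{\mathcal{H}(K)}=K(s,t)$. Lemma \ref{L1} then forces $K\ge 0$. For (b), if $K_1,K_2$ were two reproducing kernels for the same $\mathcal{H}$, then for every $t\in\mathcal{T}$ the difference $g_t:=K_1(\cdot,t)-K_2(\cdot,t)$ lies in $\mathcal{H}$, and by applying the reproducing property of both kernels to any $f\in\mathcal{H}$ we get $\langle f,g_t\rangle_{\mathcal{H}}=f(t)-f(t)=0$; taking $f=g_t$ yields $\|g_t\|_{\mathcal{H}}=0$ and hence $K_1=K_2$.

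For the converse, I would carry out the classical Moore--Aronszajn construction. First, for each $t\in\mathcal{T}$ introduce the formal symbol $K(\cdot,t)$ and set $\mathcal{H}_0:=\operatorname{span}\{K(\cdot,t):t\in\mathcal{T}\}$, viewed as a linear space of real-valued functions on $\mathcal{T}$. Define a candidate inner product on $\mathcal{H}_0$ by the formula \eqref{NormSum}, i.e.\ $\bigl\langle \sum_i\alpha_i K(\cdot,t_i),\sum_j\beta_j K(\cdot,s_j)\bigr\rangle:=\sum_{i,j}\alpha_i\beta_j K(t_i,s_j)$. I would check bilinearity and symmetry, then verify well-definedness (independence of the chosen representation) by noting that for $f=\sum_i\alpha_i K(\cdot,t_i)$ and $g=\sum_j\beta_j K(\cdot,s_j)$ the right-hand side equals $\sum_j\beta_j f(s_j)=\sum_i\alpha_i g(t_i)$, which depends only on the functions $f,g$ themselves. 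Nonnegative definiteness of $K$ gives $\langle f,f\rangle\ge 0$, so we have a semi-inner product. The key trick to promote it to an inner product is to observe the pointwise identity $f(t)=\langle f,K(\cdot,t)\rangle$ for $f\in\mathcal{H}_0$, and then apply the Cauchy--Schwarz inequality (valid already for positive semidefinite forms) to get $|f(t)|^2\le \langle f,f\rangle\,K(t,t)$; hence $\langle f,f\rangle=0$ implies $f\equiv 0$.

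The main obstacle, and the step I would devote the most care to, is the completion: I would form the abstract Hilbert space completion $\overline{\mathcal{H}_0}$ of $(\mathcal{H}_0,\langle\cdot,\cdot\rangle)$ and then show that each equivalence class of Cauchy sequences can be identified with a genuine function on $\mathcal{T}$. To do this I would use the inequality $|f_n(t)-f_m(t)|\le \|f_n-f_m\|\,\sqrt{K(t,t)}$, which follows from reproduction on $\mathcal{H}_0$, to conclude that any Cauchy sequence in $\mathcal{H}_0$ is pointwise Cauchy, hence converges pointwise to a function $f:\mathcal{T}\to\mathbb{R}$. After checking that two sequences with the same limit in norm produce the same pointwise limit, this defines an injective linear map from $\overline{\mathcal{H}_0}$ into real-valued functions on $\mathcal{T}$; take its image, equipped with the transported inner product, as $\mathcal{H}(K)$. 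The reproducing property on the dense subspace $\mathcal{H}_0$ extends by continuity of the evaluation functional $f\mapsto \langle f,K(\cdot,t)\rangle$, giving (ii) on all of $\mathcal{H}(K)$; property (i) is automatic since $K(\cdot,t)\in\mathcal{H}_0\subseteq\mathcal{H}(K)$. Finally, for uniqueness of $\mathcal{H}(K)$, if $\widetilde{\mathcal{H}}$ is any other Hilbert space of functions on $\mathcal{T}$ admitting $K$ as reproducing kernel, then $\widetilde{\mathcal{H}}$ must contain $\mathcal{H}_0$, the inner products must agree on $\mathcal{H}_0$ by \eqref{NormSum}, and by the same pointwise/norm Cauchy argument the closure of $\mathcal{H}_0$ inside $\widetilde{\mathcal{H}}$ coincides set-theoretically and isometrically with $\mathcal{H}(K)$; one then checks that $\mathcal{H}_0$ is dense in $\widetilde{\mathcal{H}}$ by observing that any element orthogonal to every $K(\cdot,t)$ must vanish pointwise by the reproducing property, forcing $\widetilde{\mathcal{H}}=\mathcal{H}(K)$.
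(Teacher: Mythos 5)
Your proof is correct, but note that the paper itself offers no proof of Proposition \ref{P1}: it is stated as a known classical result (the Moore--Aronszajn theorem) with the reader referred to \cite{A1950}, \cite{BTA2004} and \cite{HE2015}. What you have written is precisely the standard construction found in those references --- nonnegative definiteness via Lemma \ref{L1} applied to $\phi(t)=K(\cdot,t)$, uniqueness of the kernel from the reproducing property, the pre-Hilbert space $\mathcal{H}_0=\operatorname{span}\{K(\cdot,t)\}$ with the inner product \eqref{NormSum}, definiteness via Cauchy--Schwarz, and realization of the completion as a function space through the bound $|f(t)|\le\|f\|\sqrt{K(t,t)}$ --- so there is nothing to compare against; you have simply supplied the omitted classical argument. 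The one step you state but do not fully justify is injectivity of the map from the abstract completion $\overline{\mathcal{H}_0}$ into functions on $\mathcal{T}$: well-definedness (norm-equivalent Cauchy sequences have the same pointwise limit) is not the same as injectivity (an element whose pointwise limit vanishes is zero in the completion). The standard fix, consistent with the rest of your outline, is to observe that such an element $h$ satisfies $\langle h,K(\cdot,t)\rangle=0$ for every $t$ by continuity of the inner product, hence is orthogonal to the dense subspace $\mathcal{H}_0$ and therefore zero. With that sentence added, the argument is complete.
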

In the next proposition, we give a condition which characterizes the function that belong to an RKHS.
\begin{proposition}\label{P2}
A real-valued function $f$ defined on $\mathcal{T}$ belongs to the reproducing kernel Hilbert space $\mathcal{H}(K)$ if and only if there exists a constant $C$ such that, $C^2K(s,t)-f(s)f(t)$ is a nonnegative definite function on $\mathcal{T} \times \mathcal{T}$, i.e. $C^2K(s,t)-f(s)f(t) \geq 0$.
\end{proposition}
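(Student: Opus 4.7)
The plan is to prove both directions by exploiting the identification, already recorded in \eqref{NormSum}, of the inner product on $\mathcal{H}(K)$ with the quadratic form $\sum_{i,j}\alpha_i\alpha_j K(t_i,t_j)$ on finite linear combinations of kernel sections $K(\cdot,t)$.

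For the necessity direction, I would assume $f\in\mathcal{H}(K)$ and take $C=\|f\|_{\mathcal{H}(K)}$. For arbitrary $N\in\mathbb{N}$, $\alpha_1,\dots,\alpha_N\in\mathbb{R}$ and $t_1,\dots,t_N\in\mathcal{T}$, I would write $g:=\sum_{i=1}^N\alpha_i K(\cdot,t_i)$ and use the reproducing property together with \eqref{NormSum} to rewrite
\begin{equation*}
\sum_{i,j}\alpha_i\alpha_j\bigl[C^2K(t_i,t_j)-f(t_i)f(t_j)\bigr] = C^2\|g\|_{\mathcal{H}(K)}^2 - \langle f,g\rangle_{\mathcal{H}(K)}^2,
\end{equation*}
which is nonnegative by Cauchy--Schwarz because $C^2=\|f\|_{\mathcal{H}(K)}^2$. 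This gives nonnegative definiteness of $C^2K(s,t)-f(s)f(t)$.

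For the sufficiency direction, I would define a linear functional $L$ on the pre-Hilbert subspace $\mathcal{H}_0:=\mathrm{span}\{K(\cdot,t):t\in\mathcal{T}\}\subseteq\mathcal{H}(K)$ by setting $L\bigl(\sum_i \alpha_i K(\cdot,t_i)\bigr):=\sum_i \alpha_i f(t_i)$. The crucial step is to show that this is both well-defined and bounded. Both follow from the assumed inequality: expanding the nonnegative definiteness of $C^2K-ff^\top$ at the points $t_1,\dots,t_N$ yields
\begin{equation*}
\Bigl(\sum_i \alpha_i f(t_i)\Bigr)^2 \;\leq\; C^2 \sum_{i,j}\alpha_i\alpha_j K(t_i,t_j) \;=\; C^2 \Bigl\|\sum_i \alpha_i K(\cdot,t_i)\Bigr\|_{\mathcal{H}(K)}^2.
\end{equation*}
Well-definedness follows by taking the right-hand side to be zero, which forces the left-hand side to vanish; boundedness with $\|L\|\leq C$ is immediate.

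Since $\mathcal{H}_0$ is dense in $\mathcal{H}(K)$ (this is a standard consequence of Proposition~\ref{P1} and the construction of $\mathcal{H}(K)$ as the completion of $\mathcal{H}_0$), $L$ extends uniquely to a bounded linear functional on all of $\mathcal{H}(K)$. By the Riesz representation theorem, there exists $g\in\mathcal{H}(K)$ with $L(h)=\langle h,g\rangle_{\mathcal{H}(K)}$ for every $h\in\mathcal{H}(K)$. Applying this to $h=K(\cdot,t)$ and using the reproducing property yields $g(t)=\langle g,K(\cdot,t)\rangle_{\mathcal{H}(K)}=L(K(\cdot,t))=f(t)$ for all $t\in\mathcal{T}$. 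Hence $f=g\in\mathcal{H}(K)$, completing the proof. I expect the main subtlety to be the well-defined/bounded extension step, which hinges on the observation that the hypothesis is precisely the finite-dimensional Cauchy--Schwarz-type inequality one needs in order to pass from finite kernel combinations to the full completion.
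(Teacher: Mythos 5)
Your proof is correct. Note that the paper does not actually prove Proposition~\ref{P2}: it is stated as a classical fact and implicitly referred to the cited RKHS references (Aronszajn 1950; Berlinet and Thomas-Agnan 2004; Hsing and Eubank 2015). Your argument is essentially the standard one found there: Cauchy--Schwarz with $C=\|f\|_{\mathcal{H}(K)}$ for necessity, and for sufficiency the bounded linear functional on $\mathrm{span}\{K(\cdot,t):t\in\mathcal{T}\}$ (well-defined and bounded precisely because of the assumed inequality), extended by density and represented via Riesz, with the reproducing property identifying the representer with $f$. All steps check out; as a small addition you could record that the argument also yields $\|f\|_{\mathcal{H}(K)}\leq C$, and that density of the span can be seen even more directly from the fact that any $h$ orthogonal to all $K(\cdot,t)$ satisfies $h(t)=\langle h,K(\cdot,t)\rangle_{\mathcal{H}(K)}=0$.
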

Let $\mathcal{H}(K_1 \otimes K_2):=\mathcal{H}(K_1)\otimes \mathcal{H}(K_2)$ is the tensor product Hilbert space of $\mathcal{H}(K_1)$ and $\mathcal{H}(K_2)$, where $\mathcal{H}(K_1)$ and $\mathcal{H}(K_2)$ are two RKHSs of functions defined on $\mathcal{T}$ with reproducing kernels $K_1$ and $K_2$ respectively. 
Consider the map $\phi :T\longrightarrow \mathcal{H}(K_1 \otimes K_2)$ defined by $\phi (t)(\cdot ,*)=(K_1 \otimes K_2)((\cdot ,*),(t,t))$. Then, for $s,t \in \mathcal{T}$,
\begin{align*}
K_1(s,t)K_2(s,t) &= \langle K_1(\cdot , s),K_1(\cdot ,t) \rangle _{\mathcal{H}(K_1)} \langle K_2(* , s),K_2(* ,t) \rangle _{\mathcal{H}(K_2)} \\
&= \langle (K_1 \otimes K_2)((\cdot ,*),(s,s)),(K_1 \otimes K_2)((\cdot ,*),(t,t)) \rangle _{\mathcal{H}(K_1 \otimes K_2)} \\
&= \langle \phi (s),\phi (t) \rangle _{\mathcal{H}(K_1 \otimes K_2)}.
\end{align*}
Therefore by Lemma \ref{L1}, the pointwise product of two reproducing kernel $K_1$ and $K_2$ is nonnegative definite and so it is a reproducing kernel by Proposition \ref{P1}. So we can construct the RKHS $\mathcal{H}(K_1K_2)$ uniquely. In particular, if $K$ is reproducing kernel of $\mathcal{H}(K)$ then $K^2$ is reproducing kernel of $\mathcal{H}(K^2)$.

The following Theorem is fundamental for estimation procedures in the next section.
\begin{theorem}\label{T1}
Suppose that $X$ and $Y$ are two stochastic processes such that the sample paths of $X$ and $Y$, respectively, belong to $\mathcal{H}(K_1)$ and  $\mathcal{H}(K_2)$ almost surely and $E\|X\|_{\mathcal{H}(K_1)}^2 < \infty$ and $E\|Y\|_{\mathcal{H}(K_2)}^2 < \infty$. Then
\begin{itemize}
\item[(i)] $\mu _X $ and $\mu _Y$ belong to $\mathcal{H}(K_1)$ and  $\mathcal{H}(K_2)$ respectively.
\item[(ii)] $C_{XX}$ and $C_{XY}$ belong to $\mathcal{H}(K_1^2)$ and $\mathcal{H}(K_1K_2)$ respectively. 
\end{itemize}
\end{theorem}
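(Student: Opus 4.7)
The plan is to apply Proposition~\ref{P2} in each case with a carefully chosen constant $C$, bounding the resulting quadratic forms by combining the reproducing property of $K_1$ and $K_2$ (and of their tensor product $K_1 \otimes K_2$) with the Cauchy--Schwarz inequality in two forms: in a Hilbert space, and for expectations of real random variables.

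For part (i), I would take $C^2 = E\|X\|_{\mathcal{H}(K_1)}^2$, which is finite by hypothesis. For arbitrary $\alpha_1,\dots,\alpha_N \in \mathbb{R}$ and $t_1,\dots,t_N \in \mathcal{T}$, set $g := \sum_{i=1}^N \alpha_i K_1(\cdot, t_i)$. Since $X \in \mathcal{H}(K_1)$ almost surely, the reproducing property gives $X(t_i) = \langle X, K_1(\cdot, t_i)\rangle_{\mathcal{H}(K_1)}$, whence $\sum_i \alpha_i \mu_X(t_i) = E\langle X, g\rangle$. Chaining $|E[\cdot]|^2 \leq E|\cdot|^2$ with the Hilbert-space Cauchy--Schwarz yields
\[
\Bigl(\sum_i \alpha_i \mu_X(t_i)\Bigr)^2 \leq E\|X\|_{\mathcal{H}(K_1)}^2 \cdot \|g\|_{\mathcal{H}(K_1)}^2 = C^2 \sum_{i,j} \alpha_i \alpha_j K_1(t_i, t_j),
\]
where the last equality uses identity \eqref{NormSum}. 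This is exactly the condition of Proposition~\ref{P2}, so $\mu_X \in \mathcal{H}(K_1)$; the proof for $\mu_Y$ is identical.

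For part (ii), let $\tilde X = X - \mu_X$ and $\tilde Y = Y - \mu_Y$. By part (i) their sample paths still lie in the respective RKHSs, and the second-moment hypothesis is preserved. Working inside the tensor product RKHS $\mathcal{H}(K_1) \otimes \mathcal{H}(K_2)$ set up just before the theorem, the reproducing property applied coordinatewise and the identity $\langle a\otimes b, c\otimes d\rangle = \langle a,c\rangle\langle b,d\rangle$ give $\tilde X(t)\tilde Y(t) = \langle \tilde X \otimes \tilde Y,\, K_1(\cdot, t) \otimes K_2(\cdot, t)\rangle$. With $h := \sum_i \alpha_i K_1(\cdot, t_i) \otimes K_2(\cdot, t_i)$, whose squared norm equals $\sum_{i,j}\alpha_i\alpha_j K_1(t_i,t_j)K_2(t_i,t_j)$, the problem reduces to bounding $E\langle \tilde X \otimes \tilde Y, h\rangle$. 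I would then chain $|E[\cdot]| \leq E|\cdot|$, Cauchy--Schwarz in the tensor product space, the identity $\|\tilde X \otimes \tilde Y\| = \|\tilde X\|\|\tilde Y\|$, and Cauchy--Schwarz for expectations on the pair $(\|\tilde X\|, \|\tilde Y\|)$ to obtain
\[
\Bigl(\sum_i \alpha_i C_{XY}(t_i)\Bigr)^2 \leq E\|\tilde X\|^2 \cdot E\|\tilde Y\|^2 \cdot \sum_{i,j} \alpha_i \alpha_j K_1(t_i,t_j) K_2(t_i,t_j).
\]
Choosing $C^2 = E\|\tilde X\|^2 E\|\tilde Y\|^2$ and applying Proposition~\ref{P2} with kernel $K_1 K_2$ delivers $C_{XY} \in \mathcal{H}(K_1 K_2)$; specializing $Y = X$ and $K_2 = K_1$ gives $C_{XX} \in \mathcal{H}(K_1^2)$.

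The main delicacy is the moment bookkeeping in part (ii). A direct Jensen step $|E[Z]|^2 \leq E[Z^2]$ in the tensor product estimate would leave one with $E[\|\tilde X\|^2 \|\tilde Y\|^2]$, which in general demands fourth-moment hypotheses to control. The remedy, as indicated above, is to pass first through $E|Z|$ and only then to invoke Cauchy--Schwarz on the pair $(\|\tilde X\|, \|\tilde Y\|)$, producing the product $\sqrt{E\|\tilde X\|^2}\sqrt{E\|\tilde Y\|^2}$ and thereby respecting the second-moment hypothesis of the theorem.
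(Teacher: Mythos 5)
Your proof is correct, but it takes a genuinely different route from the paper's. The paper proves (i) in one line via Jensen's inequality, $\|\mu_X\|^2_{\mathcal{H}(K_1)}\le E\|X\|^2_{\mathcal{H}(K_1)}$, treating $\mu_X$ as a Bochner mean; you instead verify the quadratic-form criterion of Proposition~\ref{P2} directly, which is more elementary and does not presuppose that the pointwise mean coincides with the Bochner mean in $\mathcal{H}(K_1)$. For (ii) the paper decomposes $C_{XY}=\mu_{XY}-\mu_X\mu_Y$, applies Proposition~\ref{P2} pathwise to obtain $C_1^2K_1(s,t)-X(s)X(t)\ge 0$ and $C_2^2K_2(s,t)-Y(s)Y(t)\ge 0$, multiplies these via the Hadamard-product lemma (Lemma~\ref{L2}) to conclude $XY\in\mathcal{H}(K_1K_2)$ almost surely, and then invokes part (i) for the product process $XY$ (and separately for $\mu_X\mu_Y$). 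Your centering-plus-Cauchy--Schwarz argument in the tensor product space bypasses Lemma~\ref{L2} and the decomposition entirely. More importantly, the moment bookkeeping you flag at the end is not a hypothetical worry: the paper's appeal to part (i) for the process $XY$ tacitly requires $E\|XY\|^2_{\mathcal{H}(K_1K_2)}<\infty$, and since the available bound is $\|XY\|_{\mathcal{H}(K_1K_2)}\le\|X\|_{\mathcal{H}(K_1)}\|Y\|_{\mathcal{H}(K_2)}$, this amounts to a fourth-moment-type condition that the stated hypotheses do not supply. Your route needs only $E\bigl[\|\tilde X\|_{\mathcal{H}(K_1)}\,\|\tilde Y\|_{\mathcal{H}(K_2)}\bigr]\le\bigl(E\|\tilde X\|^2_{\mathcal{H}(K_1)}\bigr)^{1/2}\bigl(E\|\tilde Y\|^2_{\mathcal{H}(K_2)}\bigr)^{1/2}$, so it proves the theorem under exactly the stated second-moment assumptions and thereby repairs this gap.
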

The proof is based on the following Lemma.
\begin{lemma}\label{L2}
Let $\mathbf{A}$ and $\mathbf{B}$ are two $N$ dimensional matrices and $\mathbf{A} \circ \mathbf{B}$ denotes the Hadamard product of $\mathbf{A}$ and $\mathbf{B}$.
\begin{itemize}
\item[(i)] If $\mathbf{A} \geq 0$ and $\mathbf{B} \geq 0$ then $\mathbf{A} \circ \mathbf{B} \geq 0$.
\item[(ii)] If $\mathbf{A} \geq \mathbf{B} \geq 0$ then $\mathbf{A} \circ \mathbf{A} \geq \mathbf{B} \circ \mathbf{B}$.
\end{itemize}
\end{lemma}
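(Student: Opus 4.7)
The plan is to prove (i) as the classical Schur product theorem via an outer-product decomposition, and then obtain (ii) in one line from (i) together with an elementary algebraic identity.

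For part (i), my strategy is to exploit the fact that any nonnegative definite matrix is a sum of rank-one nonnegative definite matrices. Since $\mathbf{A} \geq 0$, the spectral theorem gives $\mathbf{A} = \sum_{k=1}^N \mathbf{u}_k \mathbf{u}_k^\prime$ for suitable vectors $\mathbf{u}_k \in \mathbb{R}^N$ (the eigenvectors scaled by the square roots of the eigenvalues); similarly $\mathbf{B} = \sum_{\ell=1}^N \mathbf{v}_\ell \mathbf{v}_\ell^\prime$. Expanding the Hadamard product entrywise and regrouping yields
\begin{equation*}
\mathbf{A} \circ \mathbf{B} \;=\; \sum_{k,\ell=1}^N (\mathbf{u}_k \circ \mathbf{v}_\ell)(\mathbf{u}_k \circ \mathbf{v}_\ell)^\prime,
\end{equation*}
which is a sum of rank-one nonnegative definite matrices and hence nonnegative definite. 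As an alternative route that is more in the spirit of the paper, one could invoke Lemma \ref{L1} directly: write $a_{ij}=\langle \psi_1(i),\psi_1(j)\rangle$ and $b_{ij}=\langle \psi_2(i),\psi_2(j)\rangle$ in suitable Euclidean spaces, and observe that $a_{ij}b_{ij}$ is the inner product of $\psi_1(i)\otimes\psi_2(i)$ and $\psi_1(j)\otimes\psi_2(j)$ in the tensor product space, exactly paralleling the argument used just before Theorem \ref{T1}.

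For part (ii), I would use the entrywise identity $a_{ij}^2-b_{ij}^2=(a_{ij}-b_{ij})(a_{ij}+b_{ij})$ to obtain
\begin{equation*}
\mathbf{A}\circ\mathbf{A} - \mathbf{B}\circ\mathbf{B} \;=\; (\mathbf{A}-\mathbf{B})\circ(\mathbf{A}+\mathbf{B}).
\end{equation*}
From the hypothesis $\mathbf{A} \geq \mathbf{B} \geq 0$ we have $\mathbf{A}-\mathbf{B}\geq 0$ and $\mathbf{A}+\mathbf{B}\geq 0$, so part (i) applied to these two matrices gives $(\mathbf{A}-\mathbf{B})\circ(\mathbf{A}+\mathbf{B})\geq 0$, which is exactly the conclusion of (ii).

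I do not anticipate any real obstacle: (i) is classical and (ii) is an immediate corollary. The only point that deserves a word of care is that the displayed identity in (ii) relies on the entrywise commutativity of ordinary multiplication, which makes $\mathbf{A}\circ\mathbf{B}=\mathbf{B}\circ\mathbf{A}$ and allows the cross terms to cancel when one expands $(\mathbf{A}-\mathbf{B})\circ(\mathbf{A}+\mathbf{B})$; without this cancellation the reduction to (i) would fail.
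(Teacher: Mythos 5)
Your proof is correct, and part (ii) coincides exactly with the paper's: both factor $\mathbf{A}\circ\mathbf{A}-\mathbf{B}\circ\mathbf{B}=(\mathbf{A}+\mathbf{B})\circ(\mathbf{A}-\mathbf{B})$ and apply part (i) to $\mathbf{A}+\mathbf{B}\geq 0$ and $\mathbf{A}-\mathbf{B}\geq 0$. The only divergence is in how you establish (i). The paper identifies $\mathbf{A}$ and $\mathbf{B}$ with nonnegative definite \emph{functions} on $\{1,\dots,N\}\times\{1,\dots,N\}$ and then cites the fact, proved just before Theorem \ref{T1} via Lemma \ref{L1} and the tensor-product RKHS construction, that the pointwise product of two nonnegative definite functions is nonnegative definite; this is precisely your ``alternative route.'' Your primary route instead writes $\mathbf{A}=\sum_k\mathbf{u}_k\mathbf{u}_k^\prime$ and $\mathbf{B}=\sum_\ell\mathbf{v}_\ell\mathbf{v}_\ell^\prime$ and expands
\begin{equation*}
\mathbf{A}\circ\mathbf{B}=\sum_{k,\ell}(\mathbf{u}_k\circ\mathbf{v}_\ell)(\mathbf{u}_k\circ\mathbf{v}_\ell)^\prime,
\end{equation*}
which is the classical outer-product proof of the Schur product theorem. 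That argument is entirely self-contained and needs only the spectral theorem for finite matrices, whereas the paper's version reuses Hilbert-space machinery it has already built (keeping the finite-dimensional lemma and the function-space statement unified under one argument, which is what Theorem \ref{T1} actually needs). Either way the lemma stands; your two routes and the paper's are interchangeable here.
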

\begin{proof}
(i) Let $\mathbf{A}=[a_{ij}]$, $\mathbf{B}=[b_{ij}]$ and $T=\{1,\ldots ,N\}$. Suppose that $f_1$ and $f_2$ are two functions on $T \times T$ such that $f_1(i,j)=a_{ij}$ and $f_2(i,j)=b_{ij}$, $(i,j)\in T\times T$. Then $f_1$ and $f_2$ are nonnegative definite functions. Because the pointwise product of two nonnegative definite functions is again nonnegative definite, we have $\mathbf{A} \circ \mathbf{B} \geq 0$.
\\
(ii) We have $\mathbf{A}+\mathbf{B} \geq 0$ and $\mathbf{A}-\mathbf{B} \geq 0$. By part (i) of this Lemma, $(\mathbf{A}+\mathbf{B})\circ (\mathbf{A}-\mathbf{B}) \geq 0$ and so $\mathbf{A}\circ \mathbf{A} -\mathbf{B}\circ \mathbf{B}=(\mathbf{A}+\mathbf{B})\circ (\mathbf{A}-\mathbf{B}) \geq 0$ or $\mathbf{A}\circ \mathbf{A}\geq \mathbf{B}\circ \mathbf{B}$. \qed
\end{proof}

\noindent
\textsc{Proof} of \textsc{Theorem \ref{T1}}. By Jensen's inequality, we have
\begin{equation*}
\|\mu _X\|^2_{\mathcal{H}(K_1)}\leq E\|X\|^2_{\mathcal{H}(K_1)}<\infty \qquad \text{and} \qquad \|\mu _Y\|^2_{\mathcal{H}(K_2)}\leq E\|Y\|^2_{\mathcal{H}(K_2)}<\infty,
\end{equation*}
which complete proof of (i). To prove (ii), we only show that $C_{XY} \in \mathcal{H}(K_1K_2)$, as $C_{XX} \in \mathcal{H}(K_1^2)$ is an immediate consequence of $C_{XY} \in \mathcal{H}(K_1K_2)$. Let $s,t \in \mathcal{T}$. First notice that
\begin{align*}
C_{XY}(t) &= E[X(t)Y(t)]-\mu _X(t)\mu _Y(t) \\
&=: \mu _{XY}(t)-(\mu _X \mu _Y)(t).
\end{align*}
Because $X \in \mathcal{H}(K_1)$ almost surely, by Proposition \ref{P2}, there exists a constant $C_1$ such that
\begin{equation}\label{C1X}
C_1^2K_1(s,t)-X(s)X(t) \geq 0, \qquad a.s.
\end{equation}
Similarly, there exists a constant $C_2$ such that
\begin{equation}\label{C2Y}
C_2^2K_2(s,t)-Y(s)Y(t) \geq 0, \qquad a.s.
\end{equation}
Therefore Lemma \ref{L2} together with the equations \eqref{C1X} and \eqref{C2Y} imply that
\begin{equation*}
(C_1C_2)^2(K_1K_2)(s,t)-[X(s)Y(s)][X(t)Y(t)] \geq 0, \qquad a.s.
\end{equation*}
Now, Proposition \ref{P2} implies that $XY$ belongs to $\mathcal{H}(K_1K_2)$ almost surely and therefore by part (i) of this Theorem, $\mu _{XY}\in \mathcal{H}(K_1K_2)$. It remains to show that $\mu _X \mu _Y \in \mathcal{H}(K_1K_2)$. Part (i) of this Theorem and Proposition \ref{P2} implies that there exists constants $C_3$ and $C_4$ such that
\begin{equation*}
C_3^2K_1(s,t)-\mu _X(s)\mu _X(t) \geq 0
\end{equation*}
and
\begin{equation*}
C_4^2K_2(s,t)-\mu _Y(s)\mu _Y(t) \geq 0.
\end{equation*}
So, by Lemma \ref{L2},
\begin{equation*}
(C_3C_4)^2(K_1K_2)(s,t)-[\mu _X(s)\mu _Y(s)][\mu _X(t)\mu _Y(t)] \geq 0.
\end{equation*}
Now Proposition \ref{P2} implies that $\mu _X \mu _Y \in \mathcal{H}(K_1K_2)$.
\qed


\section{Estimation Methods}\label{section3}
In this section, we introduce estimates of the parameters involved in \eqref{beta0} and \eqref{parameters}. Assume that the sample paths of $Y$ and $X_{p}$ for $p=1,\dots ,k_1$ respectively belong to $\mathcal{H}(K)$ and $\mathcal{H}(K_p)$ almost surely, where $\mathcal{H}(K)$ and $\mathcal{H}(K_p)$ are some RKHSs. Since $Z_q$s are time-independent, a natural estimate for $\mu_{Z_q}$ is $\hat{\mu}_{Z_q}=\bar{Z}_q=\dfrac{1}{n}\sum \limits_{i=1}^nZ_{qi}$. Also the mean functions $\mu_Y(t)$ and $\mu_{X_p}(t)$ can be estimated by either of the methods given in \cite{YMW2005}, \cite{LH2010}, \cite{CY2011} and \cite{ZW2016}. Denote the estimated mean functions of $Y$ and $X_p$ by $\hat{\mu}_Y(t)$ and $\hat{\mu}_{X_p}(t)$ respectively. The covariance $C_{Z_{q_1}Z_{q_2}}$ can be simply estimated by 
$\hat{C}_{Z_{q_1}Z_{q_2}}=\dfrac{1}{n}\sum \limits_{i=1}^n(Z_{q_1i}-\bar{Z}_{q_1})(Z_{q_2i}-\bar{Z}_{q_2})$. 
To estimate the one-dimensional covariance and cross-covariance functions, define the raw covariance terms
\begin{align*}
C_{X_{p_1}X_{p_2},ij}(T_{ij})&=[ U_{p_1ij}  - \hat{\mu}_{X_{p_1}} (T_{ij})][ U_{p_2ij}  - \hat{\mu}_{X_{p_2}} (T_{ij})], \\
C_{YX_p,ij}(T_{ij})&=[ V_{ij}  - \hat{\mu}_Y (T_{ij})][ U_{pij}  - \hat{\mu }_{X_p} (T_{ij})], \\
C_{X_pZ_q,ij}(T_{ij})&=[ U_{pij}  - \hat{\mu}_{X_p} (T_{ij})][Z_{qi}-\bar{Z}_q], \\
C_{YZ_q,ij}(T_{ij})&=[ V_{ij}  - \hat{\mu}_Y (T_{ij})][Z_{qi}-\bar{Z}_q].
\end{align*}
By Theorem \ref{T1}, $C_{X_{p1}X_{p2}}\in \mathcal{H}(K_{p_1}K_{p_2})$, $C_{X_pZ_q}\in \mathcal{H}(K_p)$, $C_{YX_p}\in \mathcal{H}(KK_p)$ and $C_{YZ_q}\in \mathcal{H}(K)$. Based on these results, we estimate the one-dimensional covariance and cross-covariance functions as follows:
\begin{itemize}
\item Estimate of $C_{X_{p1}X_{p2}}$. Define
\begin{equation}\label{CXX}
\hat{C}_{X_{p_1}X_{p_2}}=\mathop {\arg \min }\limits_{C  \in \mathcal{H}(K_{p_1}K_{p_2})} \left\{ \ell _{_{X_{p_1}X_{p_2}}} (C) + \lambda _{_{X_{p_1}X_{p_2}}} \, \|C \|^2_{\mathcal{H}(K_{p_1}K_{p_2})} \right\},
\end{equation}
where 
\begin{equation*}
\ell _{_{X_{p_1}X_{p_2}}} (C) = \frac{1}
{n}\sum\limits_{i = 1}^n \frac{1}{M_i} \sum\limits_{j=1}^{M_i} \left\lbrace  C_{X_{p_1}X_{p_2},ij}(T_{ij}) - C(T_{ij})\right\rbrace^2,
\end{equation*}
and $\lambda _{_{X_{p_1}X_{p_2}}}$ is a smoothing parameter.
\item Estimate of $C_{X_pZ_q}$. Define
\begin{equation}\label{CXZ}
\hat{C}_{X_pZ_q}=\mathop {\arg \min }\limits_{C  \in \mathcal{H}(K_p)} \left\{ \ell _{_{X_pZ_q}} (C) + \lambda _{_{X_pZ_q}} \, \|C \|^2_{\mathcal{H}(K_p)} \right\},
\end{equation}
where 
\begin{equation*}
\ell _{_{X_pZ_q}} (C) = \frac{1}
{n}\sum\limits_{i = 1}^n \frac{1}{M_i} \sum\limits_{j=1}^{M_i} \left\lbrace  C_{X_pZ_q,ij}(T_{ij}) - C(T_{ij})\right\rbrace^2,
\end{equation*}
and $\lambda _{_{X_pZ_q}}$ is a smoothing parameter.
\item Estimate of $C_{YX_p}$. Define
\begin{equation}\label{CYX}
\hat{C}_{YX_p}=\mathop {\arg \min }\limits_{C  \in \mathcal{H}(KK_p)} \left\{ \ell _{_{YX_p}} (C) + \lambda _{_{YX_p}} \, \|C \|^2_{\mathcal{H}(KK_p)} \right\},
\end{equation}
where 
\begin{equation*}
\ell _{_{YX_p}} (C) = \frac{1}
{n}\sum\limits_{i = 1}^n \frac{1}{M_i} \sum\limits_{j=1}^{M_i} \left\lbrace  C_{YX_p,ij}(T_{ij}) - C(T_{ij})\right\rbrace^2,
\end{equation*}
and $\lambda _{_{YZ_q}}$ is a smoothing parameter.
\item Estimate of $C_{YZ_q}$. Define
\begin{equation}\label{CYZ}
\hat{C}_{YZ_q}=\mathop {\arg \min }\limits_{C  \in \mathcal{H}(K)} \left\{ \ell _{_{YZ_q}} (C) + \lambda _{_{YZ_q}} \, \|C \|^2_{\mathcal{H}(K)} \right\},
\end{equation}
where 
\begin{equation*}
\ell _{_{YZ_q}} (C) = \frac{1}
{n}\sum\limits_{i = 1}^n \frac{1}{M_i} \sum\limits_{j=1}^{M_i} \left\lbrace  C_{YZ_q,ij}(T_{ij}) - C(T_{ij})\right\rbrace^2,
\end{equation*}
and $\lambda _{_{YZ_q}}$ is a smoothing parameter.
\end{itemize}
Now, we explain how the minimization problem \eqref{CXX} can be solved. The solutions of \eqref{CXZ}, \eqref{CYX} and \eqref{CYZ} are obtained similarly. Following the representer theorem (see \cite{W1990}),  we consider $C_{X_{p_1}X_{p_2}}$ as the form
\begin{equation}
C_{X_{p_1}X_{p_2}}(t)=\sum \limits_{i=1}^n\sum \limits_{j=1}^{M_i} a_{ij}K_{p_1}(t,T_{ij})K_{p_2}(t,T_{ij})
\end{equation}
for some vector $\mathbf{a}=[a_{11},\dots ,a_{1M_{1}},\dots ,a_{n1},\dots ,a_{nM_{n}}]^\prime$. Now by equation \eqref{NormSum} we have
\begin{align*}
\|C_{X_{p_1}X_{p_2}}\|^2_{\mathcal{H}(K_{p_1}K_{p_2})} 
&= \sum \limits_{i=1}^n\sum \limits_{j=1}^{M_i} \sum \limits_{i^\prime =1}^n\sum \limits_{j^\prime =1}^{M_{i^\prime}} a_{ij}a_{i^\prime j^\prime} K_{p_1}(T_{i^\prime j^\prime},T_{ij})K_{p_2}(T_{i^\prime j^\prime},T_{ij}) \\
&= \mathbf{a}^\prime \mathbf{Q}\mathbf{a},
\end{align*}
where
\begin{equation*}
\mathbf{Q} = \begin{pmatrix}
 \mathbf{Q}_{11} &  \mathbf{Q}_{12}&  \mathbf{Q}_{13} &\cdots &  \mathbf{Q}_{1n}  \\
 \mathbf{Q}_{21} &  \mathbf{Q}_{22} &  \mathbf{Q}_{23} &\cdots&  \mathbf{Q}_{2n} \\
 \vdots &\vdots &\ddots &\vdots &\vdots \\
 \mathbf{Q}_{n1} & \mathbf{Q}_{n2} & \mathbf{Q}_{n3}&\ldots & \mathbf{Q}_{nn}  \\ 
\end{pmatrix}
\end{equation*}
and for $i_1,i_2=1,\dots ,n$, the $(i_1,i_2)$ partition of $\mathbf{Q}$, that is $\mathbf{Q}_{i_1 i_2}$, is an $M_{i_1}\times M_{i_2}$ dimensional matrix with entries $K_{p_1}(T_{i_1 j_1},T_{i_2 j_2})K_{p_2} (T_{i_1 j_1},T_{i_2 j_2})$.
Define
\begin{equation*} 
\mathbf{g} = [g_{11},\dots ,g_{1M_1},\dots ,g_{n1},\dots ,g_{nM_n}]^\prime ,
\end{equation*}
where
\begin{equation*}
g_{ij}=C_{X_{p_1}X_{p_2},ij}(T_{ij}),\qquad i=1,\dots ,n, \quad j=1,\dots ,M_i.
\end{equation*}
Suppose $\|\cdot \|^2_{_F}$ represents the Frobenius norm. Then 
\begin{equation}\label{equivalent1}
\ell _{2} (C_{X_{p_1}X_{p_2}}) + \lambda _2 \, \|C_{X_{p_1}X_{p_2}} \|^2_{\mathcal{H}(K_{p_1}K_{p_2})}=\dfrac{1}{n} \|\mathbf{m}\circ \mathbf{g} -  \mathbf{m}\circ (\mathbf{Q}\mathbf{a}) \|^2_{_F} + \lambda_{X_{p_1}X_{p_2}}  \mathbf{a}^{\prime} \mathbf{Q} \mathbf{a},
\end{equation}
where
$\mathbf{m}=[\frac{1}{\sqrt{M_1}}\mathbf{1}^\prime_{_{M_1}},\dots ,\frac{1}{\sqrt{M_n}}\mathbf{1}^\prime_{_{M_n}}]^\prime$ and $\mathbf{1}_M$ is an $M$ dimensional vector with all one entry. So to solve the minimization problem \eqref{CXX}, it suffices to find a vector $\mathbf{a}$ that minimizes the right hand side of \eqref{equivalent1}. It is not hard to show that the minimizer of right hand side of \eqref{equivalent1} is
\begin{equation*}
\mathbf{a}= \left( \mathbf{P}+( \sum \limits_{i=1}^nM_i) \lambda_{X_{p_1}X_{p_2}} \mathbf{I}\right)^{-1}(\mathbf{m}\circ \mathbf{m}\circ \mathbf{g}),
\end{equation*}
where $\mathbf{P}=\mathbf{Q}\circ (\mathbf{1}^\prime _{_{\sum \limits_{i=1}^nM_i}} \otimes \mathbf{m}) \circ (\mathbf{1}^\prime _{_{\sum \limits_{i=1}^nM_i}} \otimes \mathbf{m})$.

The plug-in estimators of the intercept and coefficient functions are given by
\begin{equation*}
[\hat{\beta}_1(t),\dots ,\hat{\beta}_{k_1}(t),\hat{\alpha}_1(t),\dots ,\hat{\alpha}_{k_2}(t)]^\prime =\hat{\bm{ \Gamma }} _t^{-1}\hat{\bm{\gamma}}_t
\end{equation*}
and
\begin{equation*}
\hat{\beta}_0 (t) = \hat{\mu}_Y(t) - \sum \limits_{p=1}^{d_1}\hat{\beta_p} (t)\hat{\mu}_{X_p}(t) - \sum \limits_{q=1}^{d_2}\hat{\alpha}_q(t)\hat{\mu}_{Z_q}.
\end{equation*}


\section{Simulation studies} \label{section4}
In this section, we evaluate the performance of the proposed method. We provide two simulation examples. In the first simulation, we consider one functional predictor and compare our method, denoted by LSRK, with the methods given in \cite{SM2010} and \cite{MRH2016}. In the second simulation, we consider two functional and one time-independent predictors and compare our method with the method of \cite{SN2011}. The methods of \cite{SM2010} and \cite{SN2011} are implemented in the \texttt{MATLAB} package \texttt{PACE} which can be downloaded from the website \url{http://www.stat.ucdavis.edu/PACE/}. In the all simulation studies, we consider $\mathcal{T}=[0,1]$. To face with sparse and irregular situation, we generated uniformly the number of measurements for each trajectory from $\{4,5,6,7,8\}$ and the random locations $T_{ij}$s from $\mathcal{T}$.

As in \cite{SN2011}, we measure the estimation accurracy by mean absolute deviation error ($\operatorname{MADE}$) and weighted average squared error ($\operatorname{WASE}$) defined by
$$
\operatorname{MADE}=\frac{1}{d_1d_2}\left[ \sum \limits_{p=1}^{d_1}\dfrac{\int_0^1|\beta_p(t)-\hat{\beta}_p(t)|dt}{\operatorname{range}(\beta_p)} + \sum \limits_{q=1}^{d_2}\dfrac{\int_0^1|\alpha_q(t)-\hat{\alpha}_q(t)|dt}{\operatorname{range}(\alpha_q)} \right] 
$$
and
$$
\operatorname{WASE}=\frac{1}{d_1d_2}\left[ \sum \limits_{p=1}^{d_1}\dfrac{\int_0^1(\beta_p(t)-\hat{\beta}_p(t))^2dt}{\operatorname{range}^2(\beta_p)} + \sum \limits_{q=1}^{d_2}\dfrac{\int_0^1(\alpha_q(t)-\hat{\alpha}_q(t))^2dt}{\operatorname{range}^2(\alpha_q)} \right].
$$
All integrals numericaly computed by Gaussian quadrature method.  

We consider various combinations of the sample size $n\in \{100,\,150,\,200\}$ and the signal-to-noise ratio $\operatorname{StN}\in \{4,\,8,\,\infty \}$. For each configuration, we repeat the experiment $500$ times.
\subsection{Simulation study 1}
The random function $X_1$ was generated as
$$
X_1(t)=\mu_{X_1}(t)+\sum \limits_{k=1}^{50}a_k\xi_k \phi_k(t),
$$
where 
$$
\mu_{X_1}(t)=\sum \limits_{k=1}^{50}(-1)^kk^{-3/2} \phi_k(t),
$$
$$
a_k=4(-1)^k/k^2,
$$
and
$$
\phi_k(t)=\sqrt{2}\cos (2k\pi t)\,.
$$
The marginal distributions of $\xi_1,\dots ,\xi_{50}$ are $N(0,1)$. Observations from process $X(t)$ were obtained by adding measurement errors $U_{1ij}=X_{1i}(T_{ij})+\varepsilon_{ij}$, where $\varepsilon_{ij}$s were independently generated from $N(0,\sigma^2_{_{X_1}})$ with $\sigma^2_{_{X_1}}=(4.2954/\operatorname{StN})^2$.

In the model \eqref{vary} with only one predictor $X_1$, we consider $\beta_0(t)=2\sin (2\pi t)$ and $\beta_1(t)=2e^t$. The sparse and noisy response observations were obtained by $V_{ij}=\beta_0(T_{ij})+\beta_1(T_{ij})U_{iij}+\epsilon_{ij}$, where the noise terms $\epsilon_{ij}$s randomly drawn from $N(0,\sigma^2_{_{Y}})$ with $\sigma^2_{_{Y}}=(15.6815/\operatorname{StN})^2$. 

Table \ref{tab1} presents the Monte Carlo values of MADE and WASE for the three competitive methods LSRK (proposed), \cite{SM2010} and \cite{MRH2016}. Although the method of \cite{MRH2016} outperforms other two methods but it is slightly better than LSRK. From this Table, we observe that LSRK has significantly better performance than the method of \cite{SM2010}. The performance of LSRK is improved by increasing either the sample size or the signal-to-noise ratio. In Figure \ref{MISE1}, we provide the mean integrated squared errors of $\hat{\beta}_0$ and $\hat{\beta}_1$ for the method LSRK. In this Figure, the left panel is for $\hat{\beta}_0$ and the right panel for $\hat{\beta}_1$. We observe that increasing both the sample size $n$ and the signal-to-noise ratio $\operatorname{StN}$ lead to accurate estimates. This improvement is more significant when $\operatorname{StN}$ is large.
\begin{table}[h!]
\centering 
{\scriptsize \begin{tabular}{|cc|cc|cc|cc|}
\hline
 & & \multicolumn{2}{c}{LSRK} & 
\multicolumn{2}{c}{\cite{SM2010}} & \multicolumn{2}{c|}{\cite{MRH2016}}\\
\cline{3-8}
$n$ & $\operatorname{StN}$ & MADE & WASE & MADE & WASE & MADE & WASE  \\
\hline
&&&&&&&\\
 & $4$ & $0.4366$ & $0.4083$ & $0.7273$ & $ 1.8194$ & $0.3109$ & $0.2576$  \\
 &&&&&&&\\
\cline{3-8}
&&&&&&&\\
  $100$ & $8$ & $0.2477$ & $0.1736$ & $0.6053$ & $1.3622$ & $0.1197$ & $0.0446$\\
  &&&&&&&\\
\cline{3-8}
&&&&&&&\\
   & $\infty$ & $0.2238$ & $0.1434$ & $0.6213$ & $ 1.3932$ & $0.0895$ & $0.0302$ \\
   &&&&&&&\\
\hline
&&&&&&&\\
 & $4$ & $0.3806$ & $0.3270$ & $ 0.7143$ & $4.3996$& $0.2666$ & $0.1867$ \\
 &&&&&&&\\
\cline{3-8}
&&&&&&&\\
 $150$  & $8$ & $0.2036$ & $0.1109$ & $0.6470$ & $2.0066$& $0.1083$ & $0.0371$ \\
 &&&&&&&\\
\cline{3-8}
&&&&&&&\\
   & $\infty$ & $0.1905$ & $0.1015$ & $0.6235$ & $1.1665$ & $0.0886$ & $0.0293$ \\
   &&&&&&&\\
\hline
&&&&&&&\\
 & $4$ & $0.3527$ & $0.2739$ & $0.7028$ & $1.5030$ & $0.2388$ & $0.1487$ \\
 &&&&&&&\\
\cline{3-8}
&&&&&&&\\
 $200$  & $8$ & $0.1817$ & $0.0857$ & $0.6670$ & $1.5393$& $0.1055$ & $0.0354$ \\
 &&&&&&&\\
\cline{3-8}
&&&&&&&\\
   & $\infty$ & $0.1669$ & $0.0766$ & $0.6400$ & $1.3321$& $0.0855$& $0.0259$\\
   &&&&&&&\\
\hline
\end{tabular}}
\caption{Mean absolute deviation error (MADE) and weighted average squared error (WASE) for various combinations of sample size ($n$) and signal-to-noise ratio ($\operatorname{StN}$). The compared three methods are: LSRK (proposed), \cite{SM2010}, and \cite{MRH2016}.}
\label{tab1}
\end{table}
\begin{figure}[h!]
\centering
\includegraphics[scale=.75]{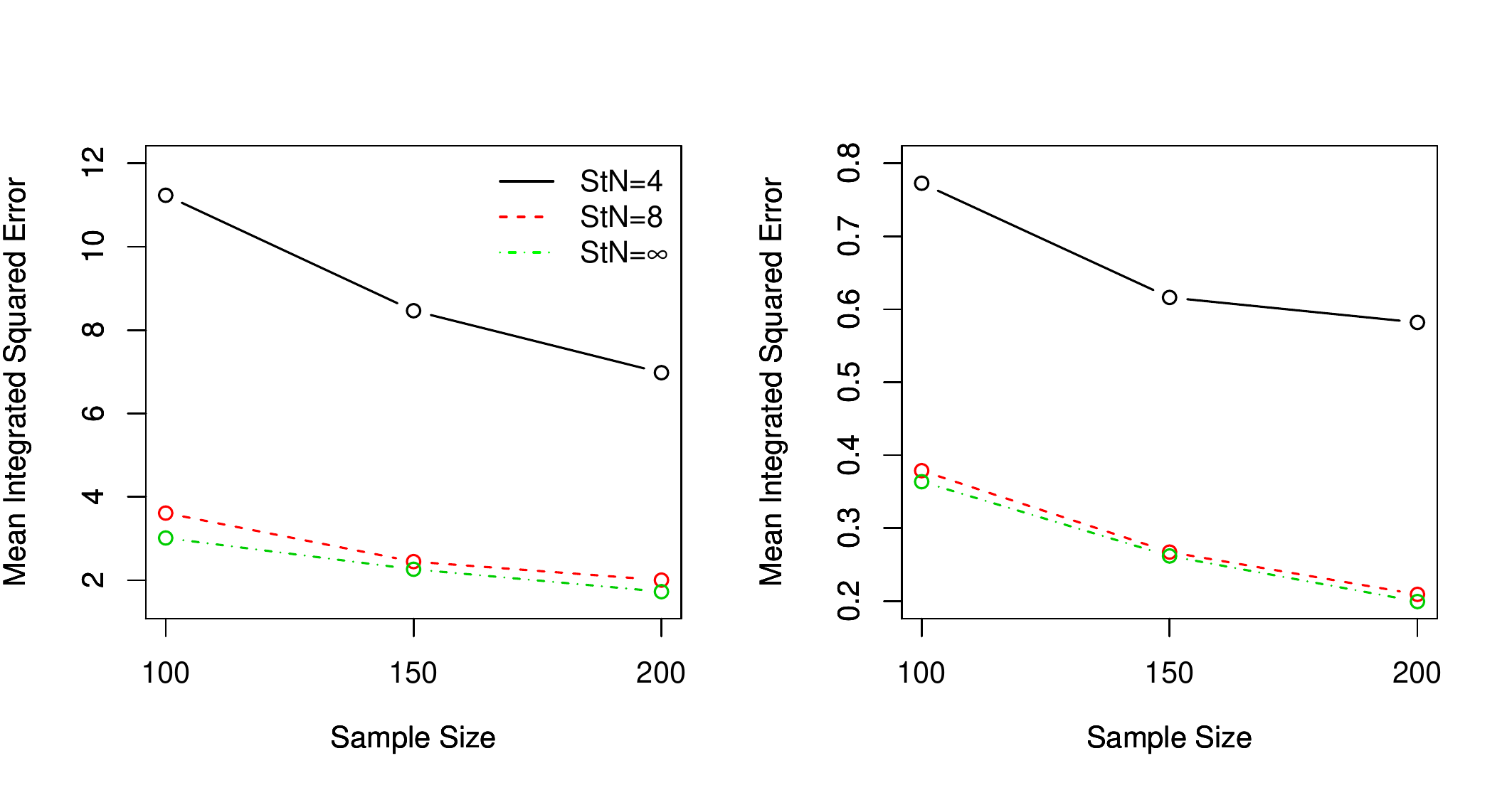}
\caption[]{Effect of signal-to-noise ratio and sample size on integrated squared errors of $\hat{\beta}_0$ (left panel) and $\hat{\beta}_1$ (right panel) for the method LSRK.}
\label{MISE1}
\end{figure}

\subsection{Simulation study 2}
The first functional predictor is same as previous subsection. For the second functional predictor, we took
$$
X_2(t)=\mu_{X_2}(t)+\sum \limits_{k=1}^{50}b_k\zeta_k \psi_k(t),
$$
where
$$
\mu_{X_2}(t)=\sin(2\pi t)-te^{-t}\, ,
$$
$$\psi_k(t)=\begin{cases}
\sqrt{2}\sin (2k\pi t) \qquad& \text{ for } k\leq 49\\
1 \qquad& \text{ for }k=50
\end{cases}$$
and
$$b_k=\begin{cases}
\sqrt{3}/2^k \qquad& \text{ for } k\leq 49\\
\sqrt{3} \qquad& \text{ for }k=50 \,.
\end{cases}$$
Also $\zeta_1,\dots ,\zeta_{50}$ are marginally distributed as $N(0,1)$. Sparse and noisy observations $U_{2ij}$s from random function $X_2$ were obtained based on model \eqref{UV}, where $\varepsilon_{2ij}$s were independent distributed as $N(0,\sigma^2_{_{X_2}})$ with $\sigma^2_{_{X_2}}=(1.2733/\operatorname{StN})^2$. The marginal distribution of the time-independent covariate $Z$ is $N(1,1)$. To have correlation between the predictors, let $\bm{\Sigma}=[\Sigma_{kl}]$ be the covariance matrix of the random vector $[Z,\,\xi_1,\,\dots ,\,\xi_{50},\,\zeta_1,\,\dots ,\,\zeta_{50}]^{\prime}$, where
$$
\Sigma_{kl}=\begin{cases}
1 \qquad & \text{ for } k=l \\
0.4^{l-1} \qquad & \text{ for } k=1,\,2\leq l\leq 51 \\
(-0.3)^{l-51} \qquad & \text{ for } k=1,\,l\geq 52 \\
0.8^{l-50} \qquad & \text{ for } k=l-50,\, l\geq 52 \\
0 \qquad & \text{ otherwise}\,.
\end{cases}
$$
The response observations $V_{ij}$s were obtained from
$$
V_{ij}=\beta_0(T_{ij})+\beta_1(T_{ij})U_{1ij}+\beta_2(T_{ij})U_{2ij}+\alpha_1(T_{ij})Z_i+\epsilon_{ij},
$$
where random errors $\epsilon_{ij}$s were independently generated from $N(0,\sigma^2_{_{Y}})$ with $\sigma^2_{_{Y}}=(15.8525/\operatorname{StN})^2$. Also $\beta_0(t)$ and $\beta_1(t)$ are same as simulation study 1, and $\beta_2(t)=5te^{-t}$ and $\alpha_1(t)=2t$.

We compare LSRK with the method of \cite{SN2011}. Table \ref{tab2} summarizes the Monte Carlo values of $\operatorname{MADE}$ and $\operatorname{WASE}$ for two methods. In all combinations of $n$ and $\operatorname{StN}$, LSRK has the smallest values of MADE and WASE. Moreover, LSRK appears to be more stable. As expected, increasing either sample size $n$ and signal-to-noise ration $\operatorname{StN}$ decreases estimation errors. Figure \ref{MISE2} displays mean integrated squared errors of the estimated coefficient functions, the top left panel for $\hat{\beta}_0$, the top right panel for $\hat{\alpha}_1$, the bottom left panel for $\hat{\beta}_1$ and the bottom right panel for $\hat{\beta}_2$. This Figure reveals that there is a general tendency for the mean integrated squared errors to decrease as either sample size or signal-to-noise ratio increases.
\begin{table}[h!]
\centering 
{\scriptsize \begin{tabular}{|cc|cc|cc|}
\hline
 & & \multicolumn{2}{c}{LSRK} & 
\multicolumn{2}{c|}{\cite{SN2011}} \\
\cline{3-6}
$n$ & $\operatorname{StN}$ & MADE & WASE & MADE & WASE   \\
\hline
&&&&&\\
 & $4$ & $0.5556$ & $0.7703$ & $1.6690$ & $8378.6490$   \\
 &&&&&\\
\cline{3-6}
&&&&&\\
  $100$ & $8$ & $0.3228$ & $0.4987$ & $0.9533$ & $280.0505$ \\
  &&&&&\\
\cline{3-6}
&&&&&\\
   & $\infty$ & $0.2918$ & $0.3709$ & $0.8757$ & $412.0377$  \\
   &&&&&\\
\hline
&&&&&\\
 & $4$ & $0.4808$ & $0.4899$ & $1.3142$ & $1177.7210$ \\
 &&&&&\\
\cline{3-6}
&&&&&\\
 $150$  & $8$ & $0.2587$ & $0.3119$ & $0.8899$ & $98.4621$ \\
 &&&&&\\
\cline{3-6}
&&&&&\\
   & $\infty$ & $0.2220$ & $0.2771$ & $0.9590$ & $279.5293$ \\
   &&&&&\\
\hline
&&&&&\\
 & $4$ & $0.4255$ & $0.3679$ & $1.1581$ & $312.3150$ \\
 &&&&&\\
\cline{3-6}
&&&&&\\
 $200$  & $8$ & $0.2197$ & $0.2809$ & $0.9329$ & $636.4345$ \\
 &&&&&\\
\cline{3-6}
&&&&&\\
   & $\infty$ & $0.1922$ & $0.2254$ & $0.9388$ & $675.6874$ \\
   &&&&&\\
\hline
\end{tabular}}
\caption{Mean absolute deviation error (MADE) and weighted average squared error (WASE) for various combinations of sample size ($n$) and signal-to-noise ratio ($\operatorname{StN}$). The compared two methods are: LSRK (proposed), and \cite{SN2011}.}
\label{tab2}
\end{table}
\begin{figure}[h!]
\centering
\includegraphics[scale=.75]{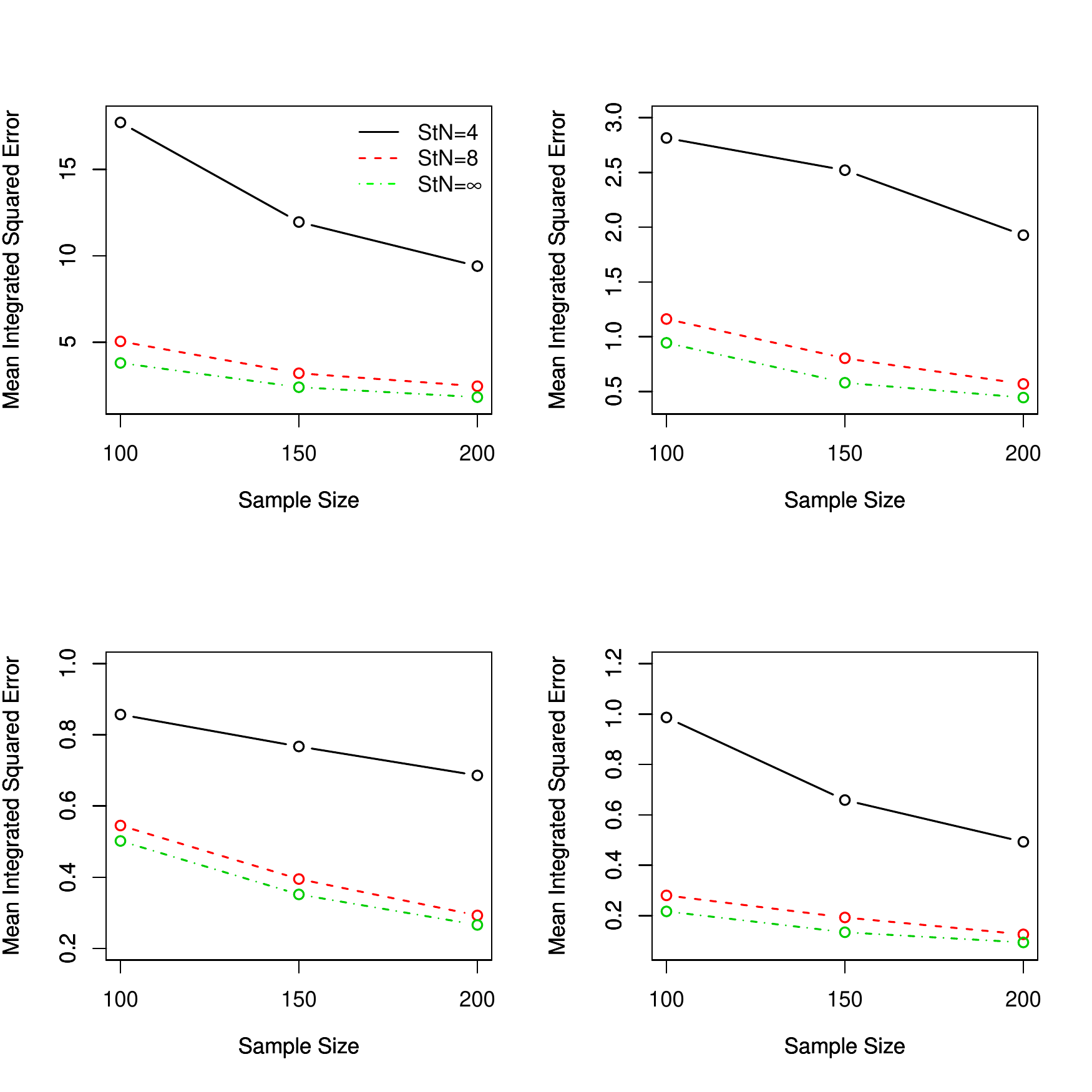}
\caption[]{Effect of signal-to-noise ratio and sample size on integrated squared errors of $\hat{\beta}_0$ (top left panel), $\hat{\alpha}_1$ (top right panel), $\hat{\beta}_1$ (bottom left panel) and $\hat{\beta}_2$ (bottom right panel) for the method LSRK.}
\label{MISE2}
\end{figure}


\section{Application}\label{section5}
Primary biliary cirrhosis (PBC) is an autoimmune liver disease. It caused by damage to the bile (a fluid produced in the liver to aid in the digestion of fat) ducts in the liver. When the bile ducts are damaged,  bile builds up and causes liver scarring, cirrhosis, and eventually liver failure. The dataset that we use in this paper was collected by the Mayo Clinic between $1974$ and $1984$.  The dataset is given in Appendix D of \cite{FH1991} and also included in the \texttt{R} package \texttt{survival} which is available at \url{https://cran.r-project.org/package=survival}. The patients were scheduled to have their blood characteristics measured at six months, one year and annually after diagnosis. Because of missing appointments, death or liver transplantation during the study and other factors, the actual times of the measurements are random, irregular and sparse.

This dataset contains some general information for example age in days and sex, and some multiple laboratory results for example serum bilirubin in mg$/$dl, albumin  in gm$/$dl and prothrombin time in seconds. 
Bilirubin is a yellow substance that is formed during the normal breakdown of red blood cells. After circulating in the blood, the liver excretes bilirubin into bile ducts. The normal adult serum bilirubin level is less than $1$ mg$/$dl. The accumulation of bilirubin leads to jaundice. Albumin is a protein made by the liver. It is the main protein in the blood that causes fluid to remain within the bloodstream. A diseased liver produces insufficient albumin. The normal albumin range is $3.5$ to $5.5$ g$/$dl. 
 Prothrombin time is the time it takes for blood to clot. Liver disease can cause slow blood clotting. The average time range for prothrombin time is about $10$ to $14$ seconds.

The objective of this analysis is to explore the association between prothrombin time ($Y$) as a response and age ($Z_1$), serum bilirubin ($X_1$) and albumin ($X_2$) as predictors. Among $276$ female patients, we include $137$ patients having D-penicillamine and their measurements before $2500$ days. The median number of observations per patients is $5$. Individual trajectories and data along with the smoothed estimated mean functions of prothrombin time, bilirubin and albumin are given in Figure \ref{data}. The mean prothrombin time slightly increases by passing time but it is normal. The mean amount of bilirubin is above the normal level and it has an increasing trend. By passing the time, the mean amount of albumin made by the liver decreases.
\begin{figure}[h!]
\centering
\includegraphics[scale=.75]{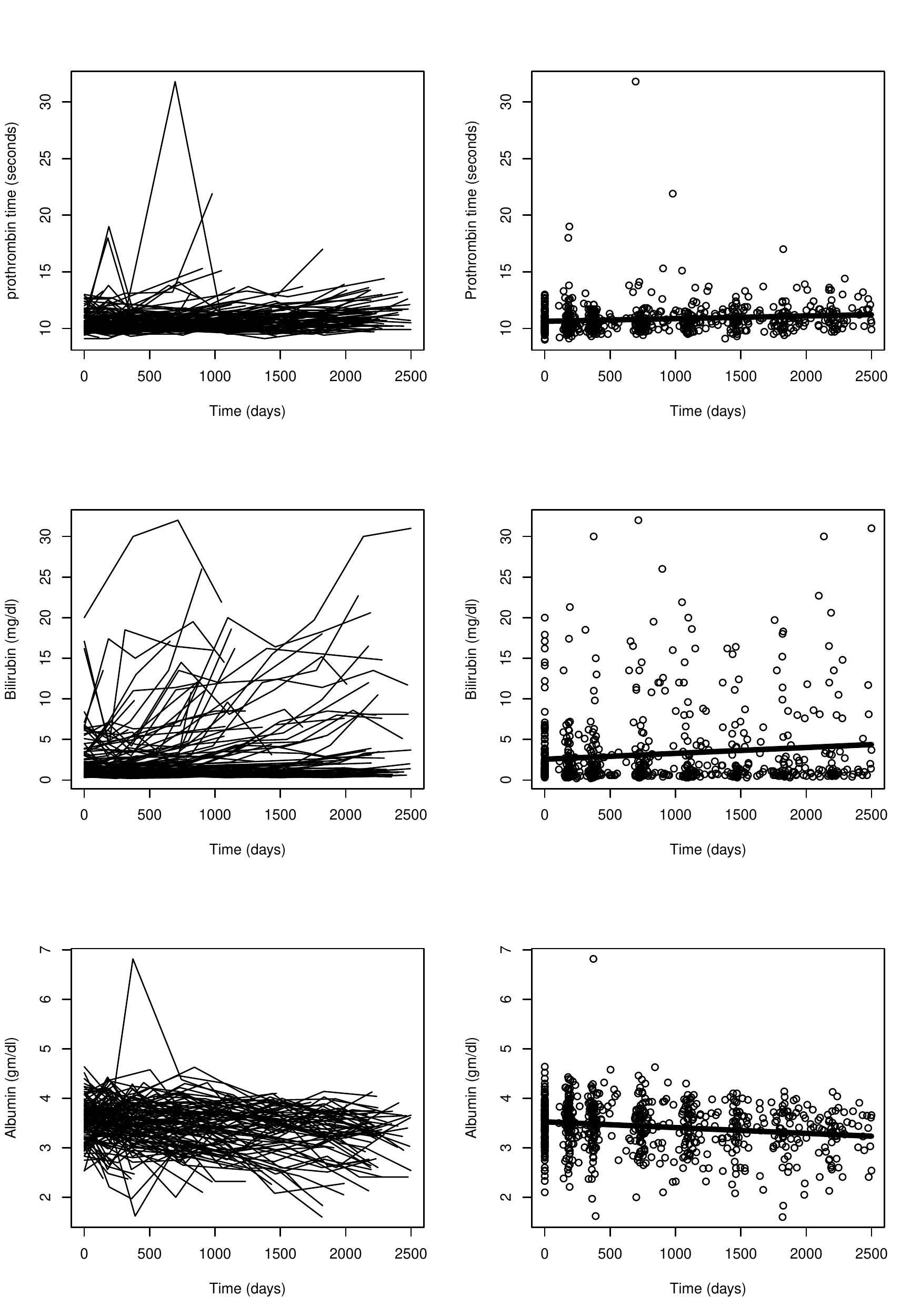}
\caption[]{The left panels give the observed individual trajectories, the top panel for prothrombin time, the middle panel for bilirubin, and the lower panel for albumin.  The observed data along with the estimated mean functions (solid line) are shown in the right panels, the top panel for prothrombin time, the middle panel for bilirubin, and the lower panel for albumin.}
\label{data}
\end{figure}

Figure \ref{coef} plots the estimated varying coefficient functions $\beta_0$, $\alpha_1$, $\beta_1$ and $\beta_2$ using LSRK. From the Figure we observe that before $2000$ days the association between age and prothrombin time is negligible but after $2000$ days age has a negative effect on prothrombin time. There exists a negative association between albumin and prothrombin time, especially after $2000$ days. The effect of  bilirubin on prothrombin time before $2000$ days is minor and fluctuates between positive and negative while after $2000$ days the association tends to be negative.
\begin{figure}[h!]
\centering
\includegraphics[scale=.75]{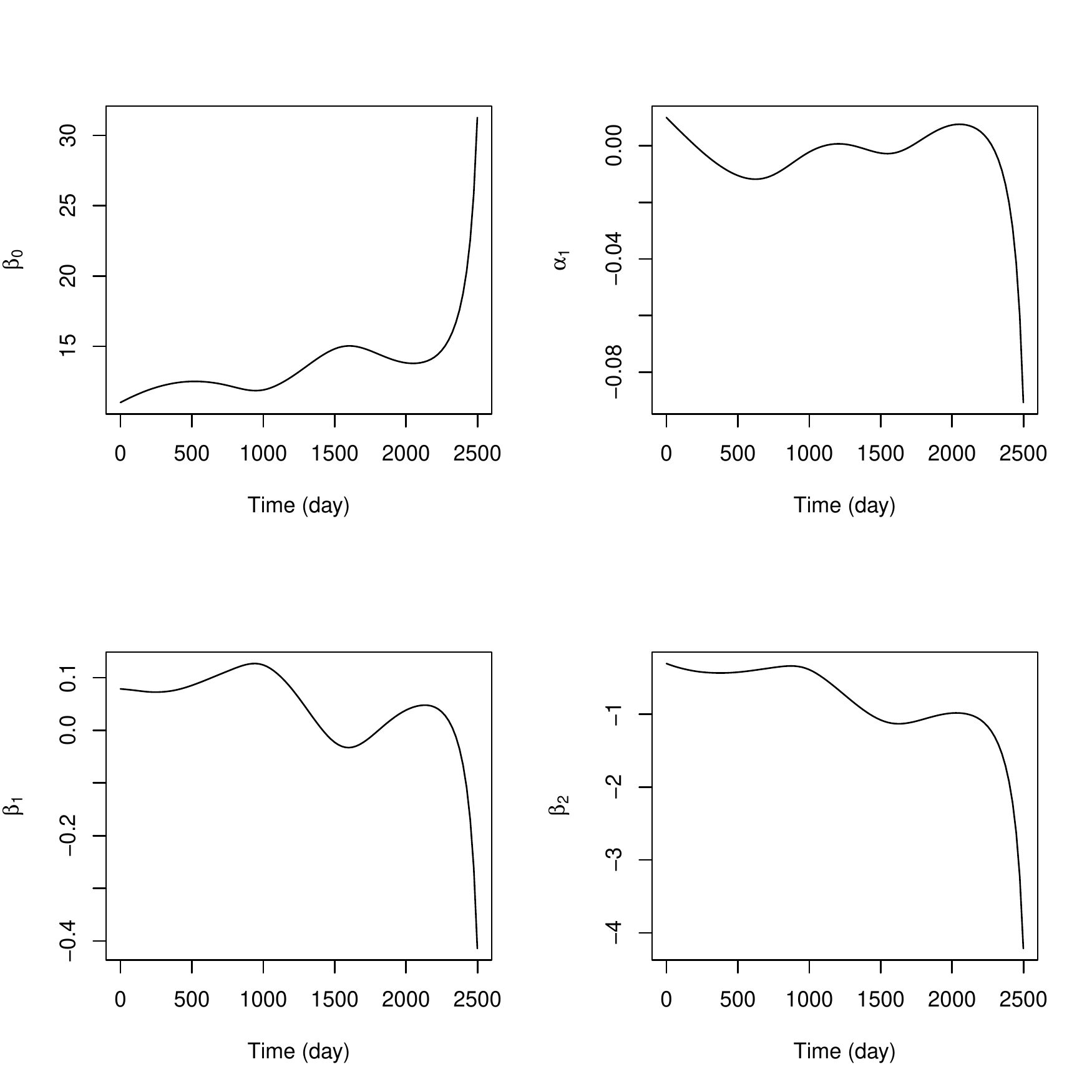}
\caption[]{The estimated varying coefficient functions $\beta_0$, $\alpha_1$, $\beta_1$ and $\beta_2$.}
\label{coef}
\end{figure}


\end{document}